\theoremstyle{plain}
\newtheorem{theorem}{Theorem}[section]
\newtheorem{lemma}[theorem]{Lemma}
\newtheorem{prop}[theorem]{Proposition}
\theoremstyle{definition}
\numberwithin{equation}{section}
\newcommand{\tr}{\operatorname{Tr}}
\begin{document}

\title[Positive Mass with Angular Momentum and Charge] {The Positive Mass Theorem with Angular Momentum and Charge for Manifolds with Boundary}

\author[Bryden]{Edward T. Bryden}
\author[Khuri]{Marcus A. Khuri}
\author[Sokolowsky]{Benjamin D. Sokolowsky}
\address{Department of Mathematics\\
Stony Brook University\\
Stony Brook, NY 11794, USA}
\email{ebryden@math.sunysb.edu, khuri@math.sunysb.edu, bsokolowsky@math.sunysb.edu}

\thanks{M. Khuri acknowledges the support of NSF Grant DMS-1708798.}

\begin{abstract}
Motivated by the cosmic censorship conjecture in mathematical relativity,
we establish the precise mass lower bound for an asymptotically flat Riemannian 3-manifold with nonnegative scalar curvature and minimal surface boundary, in terms of angular momentum and charge. In particular this result does not require the restrictive assumptions of simple connectivity and completeness, which are undesirable from both a mathematical and physical perspective.
\end{abstract}
\maketitle

\section{Introduction}
\label{sec1} \setcounter{equation}{0}
\setcounter{section}{1}

Heuristic arguments of Penrose \cite{Penrose} which are well-known to motivate the Penrose inequality \cite{Bray,HuiskenIlmanen}, may also be used to obtain a conjectured lower bound for the ADM mass $m$ of a spacetime in terms of total angular momentum $\mathcal{J}$ and charge $Q$, namely
\begin{equation}\label{0}
m^2 \geq \frac{Q^{2}+  \sqrt{Q^4 + 4\mathcal{J}^2}}{2},
\end{equation}
with equality occurring only for the extreme Kerr-Newman black hole. The arguments leading to \eqref{0} require conservation of angular momentum and charge, and for this it is typically assumed that the spacetime is axisymmetric and satisfies certain conditions related to its matter fields.
Inequality \eqref{0} acts as a necessary condition for the grand cosmological censorship conjecture \cite{Penrose1} as well as the final state conjecture \cite{Klainerman}. Therefore while a counterexample would be detrimental for at least one of these conjectures, confirmation under the most general of possible settings only adds to the prevailing belief in their validity. Moreover, inequality \eqref{0} may be viewed as a refinement of the positive mass theorem \cite{SchoenYau0,Witten} in which a precise contribution to the total mass is given in terms of the rotation and charge of black holes.

The appropriate mathematical setting in which to study this inequality is that of an initial data set $(M,g,k)$, consisting of a Riemannian 3-manifold with metric $g$ and a symmetric 2-tensor $k$ representing the second fundamental form of the embedding into spacetime. An asymptotically flat end is required for the definition of ADM mass, and nonnegative scalar curvature is assumed in order that the positive mass theorem is valid. This curvature condition arises from the physical consideration of nonnegative matter energy density together with a maximal slice $\tr k=0$. Through the combined work of several authors \cite{ch1,ChruscielCosta,ChruscielLiWeinstein,Costa,Dain0,KhuriWeinstein,SchoenZhou}, inequality \eqref{0} has been established when the manifold $(M,g)$ is simply connected, complete, and contains another end which is either asymptotically flat or asymptotically cylindrical. The recent survey \cite{DainGabachClement} details many of the developments. The proof follows a two step procedure, the first of which is to obtain an initial lower bound for the mass in terms a renormalized harmonic map energy. The second consists of minimizing this energy, and showing that the unique global minimizer is the singular harmonic map associated with extreme Kerr-Newman data. Simple connectivity is used in the first step to introduce a specialized coordinate system (Brill coordinates) that allows for a simple bulk integral expression of the mass. Completeness and the asymptotics of the other end also play an important role here in that they prevent the appearance of boundary terms in the formula for the mass. Furthermore, simple connectivity is also used in the second step to ensure the existence of a twist potential to efficiently encode angular momentum and construct the harmonic map energy. Thus, these hypotheses play fundamental roles in the proof and whether they can be removed has been unclear. On the other hand, the Penrose arguments motivating the inequality require no such hypotheses, and for this reason it has been conjectured that these assumptions are unnecessary. They are also unnatural as the positive mass theorem itself does not require such restrictions. In particular, it is important to allow the initial data to have minimal surface boundary, as these may be interpreted as cross-sections of the event horizon. Furthermore, significant generalizations of the positive mass theorem, including the positive mass theorem with charge \cite{GHHP} and the Penrose inequality (with charge) \cite{KhuriWeinsteinYamada}, require a minimal surface boundary to be meaningful. Additionally, from a physical perspective it is not desirable to necessitate the presence of a secondary asymptotic end, as this typically represents the interior of a black hole. Indeed, from the point of view of an outside observer, it is not possible to know the structure of spacetime contained within the event horizon. As for simple connectivity, although topological censorship \cite{FriedmanSchleichWitt} implies that this is an appropriate assumption for initial data within the domain of outer communication, it says nothing about the fundamental group of the interior black hole region. In fact, it suggests that all nontrivial topology is contained within the black hole, and therefore the combined assumptions of simple connectivity, completeness, and the existence of a secondary asymptotically flat end are not physically justified.

The purpose of the present work is to establish \eqref{0} in generality without the unwanted hypotheses discussed above, for a single black hole. We also obtain a mass lower bound in the multi-black hole case consistent with that proved under the more restrictive hypotheses in \cite{ChruscielLiWeinstein,KhuriWeinstein}.

Let vector fields $E$ and $B$ defined on $M$ represent the electric and magnetic fields. It will be assumed that both are divergence free, that is, there is no charged matter. Traces of the Gauss and Codazzi equations for the embedding of the initial data into spacetime yield formulas for the energy and momentum densities of the matter fields minus electromagnetic contributions via the constraint equations
\begin{equation}\label{1}
16\pi\mu_{em} = R+(\tr k)^{2}-|k|^{2}-2(|E|^{2}+|B|^{2}),
\end{equation}
\begin{equation}
8\pi J_{em} = \operatorname{div}(k-(\tr k)g)+2E\times B,
\end{equation}
where $R$ is scalar curvature and $(E\times B)_{i}=\epsilon_{ijl}E^{j}B^{l}$ denotes the cross product in which $\epsilon$ is the volume form of $g$. We will say that the initial data are axially symmetric if there exists a $U(1)$ subgroup within the group of isometries of the Riemannian manifold $(M,g)$, and all relevant quantities are invariant
under the $U(1)$ action. In particular, the following Lie derivatives vanish
\begin{equation}\label{2}
\mathfrak{L}_{\eta}g=\mathfrak{L}_{\eta}k=\mathfrak{L}_{\eta}E=\mathfrak{L}_{\eta}B=0,
\end{equation}
where $\eta$ is the Killing field corresponding to the symmetry.

The initial data will be referred to as asymptotically flat if there exists an end $M_{\text{end}}\subset M$ diffeomorphic to $\mathbb{R}^{3}\setminus\text{Ball}$, such that with respect to the asymptotic coordinates
\begin{equation}\label{falloff1}
g_{ij}=\delta_{ij}+o_{\ell}(r^{-\frac{1}{2}}),\text{ }\text{ }\text{ }\text{ }\partial g_{ij}\in
L^{2}(M_{\text{end}}),\text{
}\text{ }\text{ }
\text{ }k_{ij}=O_{\ell-1}(r^{-\lambda}),\text{ }\text{ }\text{ }\text{ }\mu_{em}, J_{em}^{i},J_{em}(\eta)\in L^{1}(M_{\text{end}}),
\end{equation}
\begin{equation}\label{falloff2}
E^{i}=O_{\ell-1}(r^{-\lambda}),\text{ }\text{ }\text{ }\text{ }\text{ }B^{i}=O_{\ell-1}(r^{-\lambda}),\text{
}\text{ }\text{ }
\text{ }\lambda>\frac{3}{2},
\end{equation}
for some $\ell\geq 5$.
These fall-off conditions ensure that the ADM mass and angular momentum, as well as the total electric and magnetic charge are well-defined by
\begin{equation}\label{admmass}
m=\frac{1}{16\pi}\int_{S_{\infty}}(g_{ij,i}-g_{ii,j})\nu^{j},
\end{equation}
\begin{equation}\label{admam}
\mathcal{J}=\frac{1}{8\pi}\int_{S_{\infty}}(k_{ij}-(\tr k)g_{ij})\nu^{i}\eta^{j},
\end{equation}
\begin{equation}\label{charge}
Q_{e} = \frac1{4\pi} \int_{S_\infty} E_i \nu^i\, , \qquad
Q_{b} = \frac1{4\pi} \int_{S_\infty} B_i \nu^i\, .
\end{equation}
Here limits as $r\rightarrow\infty$ for integrals taken over coordinate spheres $S_r$, with unit outer normal $\nu$, are represented by $S_{\infty}$. The total charge is the combination of the total electric and magnetic charges $Q^2=Q_{e}^2+Q_{b}^2$. It should also be pointed out that \eqref{admmass} is referred to as mass by an abuse of terminology, as this expression is technically the ADM energy or time component in the ADM 4-momentum. Furthermore an electromagnetic contribution to the definition of angular momentum is sometimes included \cite{DainKhuriWeinsteinYamada}, although the two definition agree under the current asymptotics.

In order for the Penrose heuristic arguments to be valid an energy condition is needed so that Hawking's area theorem \cite{Hawking} may be utilized, and hypotheses must be imposed which guarantee the conservation of angular momentum and charge. This motivates the following assumptions used in the proof of \eqref{0} from the initial data perspective. For the energy condition, nonnegative nonelectromagnetic matter energy density $\mu_{em}\geq 0$ will be assumed. This is consistent with the dominant energy condition typically used in association with the positive mass theorem (including charge), as well as previous inequalities involving angular momentum. Furthermore, axisymmetry and the electrovacuum assumption $\mu_{em}=|J_{em}|=0$ have been used in the past \cite{Dain} with regards to conservation of angular momentum and charge. However, here it is only necessary to require vanishing of one component of the momentum density, namely in the Killing direction $J_{em}(\eta)=0$, and the divergence free property for the Maxwell field. Inequality \eqref{0} is known to be false without the assumption of axisymmetry \cite{HuangSchoenWang}.

\begin{theorem}\label{massam}
Let $(M,g,k,E,B)$ be an axisymmetric, maximal initial data set for the Einstein-Maxwell equations with one asymptotically flat end, minimal surface boundary, and satisfying $\mu_{em}\geq 0$ in addition to $J_{em}(\eta)=\operatorname{div}E=\operatorname{div}B=0$. If either
\begin{itemize}
\item [(i)] the outermost minimal surface has a single component, or\smallskip

\item [(ii)] the boundary $\partial M$ has one component and $M$ is simply connected,
\end{itemize}
then
\begin{equation}\label{3}
m^2>\frac{Q^{2}+\sqrt{Q^{4}+4\mathcal{J}^{2}}}{2}.
\end{equation}
\end{theorem}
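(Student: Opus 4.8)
The plan is to carry out the familiar two-step scheme --- a mass lower bound by a renormalized harmonic-map energy, followed by minimization of that energy --- but to perform it on the \emph{exterior region} $M_{\mathrm{ext}}$ lying outside the outermost minimal surface $\Sigma$, where the topology is controlled, rather than on all of $M$. Note first that maximality $\tr k=0$ together with \eqref{1} and $\mu_{em}\ge 0$ forces $R\ge 0$, so the standard structure theory applies: $\Sigma$ is outer area-minimizing, each component is a $2$-sphere, and $M_{\mathrm{ext}}$ has a single asymptotically flat end and no other closed minimal surfaces. Under hypothesis (i) the single-component assumption on $\Sigma$, and under hypothesis (ii) simple connectivity together with the single boundary component, both yield the conclusion I actually need, namely that the first cohomology of the quotient orbit space $M_{\mathrm{ext}}/U(1)$ vanishes; this is what allows global axisymmetric Brill-type coordinates $(\rho,z,\phi)$ to be introduced on $M_{\mathrm{ext}}$ even though $M$ itself is not assumed simply connected.

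In these coordinates I would assemble the data into a map $\Phi=(U,v,\chi,\psi)\colon M_{\mathrm{ext}}\to\mathbb{H}^{2}_{\mathbb{C}}$ into the complex hyperbolic plane, with $U$ encoding the norm of $\eta$, $v$ the twist potential, and $\chi,\psi$ the electric and magnetic potentials. Closedness of the twist one-form is precisely $J_{em}(\eta)=0$, and closedness of the electromagnetic potential forms is $\operatorname{div}E=\operatorname{div}B=0$; the vanishing of the periods required to integrate these closed forms to single-valued potentials is exactly what the cohomological conclusion of the previous paragraph supplies. I would then check that the fluxes $\mathcal{J}$, $Q_e$, $Q_b$ at $S_\infty$ can be recovered from the asymptotic and axis behaviour of $\Phi$, so that the geometric quantities in \eqref{3} are faithfully encoded in the analytic minimization problem.

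The mass lower bound is next. Using the Brill representation on $M_{\mathrm{ext}}$ and the constraint \eqref{1} with $\mu_{em}\ge 0$, I would derive an inequality of the form
\begin{equation*}
m\ \ge\ \mathcal{M}(\Phi)+\frac{1}{16\pi}\int_{\Sigma}(\cdots),
\end{equation*}
where $\mathcal{M}$ is the renormalized harmonic-map energy whose critical points are the reduced Einstein--Maxwell maps. The genuinely new ingredient, and the step I expect to be the main obstacle, is the boundary integral over $\Sigma$: in the earlier complete setting this term was absent precisely because completeness was assumed, whereas here I must show, using that $\Sigma$ is minimal (and outer area-minimizing) and $U(1)$-invariant, that it has a favorable sign and can be discarded, leaving $m\ge\mathcal{M}(\Phi)$. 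Controlling this sign will require coupling the minimal-surface condition to the behaviour of $U$ and the coordinates near $\Sigma$, and is exactly where the analysis replacing the role of the second asymptotic end must be done carefully.

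Finally I would minimize $\mathcal{M}$ over all maps with the prescribed charges and angular momentum. Because the target $\mathbb{H}^{2}_{\mathbb{C}}$ is nonpositively curved, the energy is convex along the relevant deformations, so (as in the works cited in the introduction) the infimum is attained by the unique extreme Kerr--Newman map $\Phi_{\KN}$, with $\mathcal{M}(\Phi_{\KN})=m_{\KN}$ and $m_{\KN}^{2}=\tfrac{1}{2}\big(Q^{2}+\sqrt{Q^{4}+4\mathcal{J}^{2}}\big)$; thus $m\ge\mathcal{M}(\Phi)\ge m_{\KN}$. Strictness then comes for free: equality throughout would force $\Phi=\Phi_{\KN}$ and the metric to take the extreme Kerr--Newman form, but that initial data set is complete with an asymptotically cylindrical (degenerate horizon) end at infinite distance and therefore has no compact minimal boundary. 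Since our data do possess such a boundary $\Sigma$, equality is impossible and \eqref{3} holds strictly.
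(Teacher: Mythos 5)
You have correctly identified the boundary term as the crux, but your proposed resolution --- show the term over $\Sigma$ has a favorable sign and discard it --- cannot yield the sharp constant, and this is where the argument genuinely fails. The favorable sign itself is easy: in pseudospherical coordinates minimality of the inner sphere is exactly $\partial_{r}\left(U-\tfrac{1}{2}\alpha\right)=2/m_{1}>0$, i.e.\ \eqref{minimal}, so the inner flux in the mass formula is positive. The trouble is with what remains after discarding it, namely $m\geq\mathcal{M}(\Phi)$ with $\mathcal{M}$ the reduced energy integrated over the exterior region only. The minimization theory you then invoke (convexity, uniqueness of the extreme Kerr--Newman minimizer, the Schoen--Zhou gap bound) is a theory for maps defined on all of $\mathbb{R}^{3}$ minus the axis with prescribed axis constants; no such theory exists for an exterior domain with free data on the inner sphere, and in fact the inequality $\inf\mathcal{M}\geq m_{\mathrm{KN}}$ is \emph{false} there: the restriction of the extreme Kerr--Newman map to $\mathbb{R}^{3}\setminus B_{m_{1}/2}$ is an admissible competitor (its axis constants, hence its $\mathcal{J}$, $Q_{e}$, $Q_{b}$, are unchanged, cf.\ \eqref{potentialcharge}), and its energy is strictly less than $m_{\mathrm{KN}}$ because it omits the positive energy contribution of the deleted ball. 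So the chain $m\geq\mathcal{M}(\Phi)\geq m_{\mathrm{KN}}$ breaks at the second inequality no matter how carefully the boundary term is treated; restoring it would require constraining competitors on the inner sphere, and the reflection-type constraint that does so is precisely the paper's device. Indeed, the paper \emph{doubles} the data across the minimal boundary by conformal inversion, and uses minimality not to produce a sign but to show that $\bar{U}-\tfrac{1}{2}\bar{\alpha}$ is $C^{1,1}$ across the reflection sphere (equations \eqref{agree1} and \eqref{agree}); this is exactly the regularity needed for the Brill-type mass formula \eqref{masslower} to hold on the doubled manifold $\mathbb{R}^{3}\setminus\{0\}$ with no boundary term at all. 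Only then, with a globally defined map whose full-space energy is bounded by $m$, do the gap bound \eqref{gapbound} and the identity \eqref{kerrnewman} apply. Strictness also comes out more cleanly after doubling: equality would force $\bar{U}=U_{0}$, contradicting $\bar{U}=2\log r+O(1)$ at the reflected end versus $U_{0}=\log r+O(1)$ for extreme Kerr--Newman; your version needs an extra rigidity step, since equality in the energy inequality only identifies the maps, not the initial data set.

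There are two further gaps. First, in both cases you pass to the region exterior to the outermost minimal surface $\Sigma$ and work with its orbit space $M_{\mathrm{ext}}/U(1)$, which tacitly assumes that $\Sigma$ is invariant under the $U(1)$ action. This is not automatic and is precisely Proposition \ref{axi} of the paper, proved there by flowing $\Sigma$ along the Killing field and invoking uniqueness of the outermost minimal surface; without it, $M_{\mathrm{ext}}$ need not even be an axisymmetric data set. Second, in case (ii) the boundary $\partial M$ is minimal but is not assumed outer-minimizing, so it need not coincide with $\Sigma$, and $\Sigma$ could a priori be disconnected even though $\partial M$ is connected; your reduction to $M_{\mathrm{ext}}$ therefore does not cover case (ii) as stated. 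The paper instead handles (ii) by doubling across $\partial M$ itself, and then reduces case (i) to case (ii) via Proposition \ref{axi} together with the Huisken--Ilmanen structure theorem for the region exterior to $\Sigma$.
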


The first point to note is that there is a strict inequality in \eqref{3}. This is to be expected since from the heuristic physical arguments leading to \eqref{0}, equality should only be achieved if the initial data agree with the canonical slice of an extreme Kerr-Newman spacetime. However the extreme Kerr-Newman data do not possess a minimal surface, but rather have a cylindrical end, and therefore do not satisfy the hypotheses of Theorem \ref{massam}. The minimal surface boundary, which could consist of many components, together with the asymptotically flat end guarantee the existence of an outermost minimal surface \cite{Eichmair}, and the assumption that it has one component is analogous to the case of the Penrose inequality treated by Huisken and Ilmanen \cite{HuiskenIlmanen}. In order to treat \eqref{3} in the presence of a multicomponent outermost minimal surface it is most likely that new ideas will be needed, as was the case for the multiple black hole version of the Penrose inequality established by Bray \cite{Bray}. It is interesting to note that unlike the Penrose inequality, \eqref{3} continues to hold if the boundary $\partial M$ is merely a single component minimal surface but not necessarily outer area minimizing. In fact under this assumption treated in $(ii)$, one can drop the hypothesis on the outermost minimal surface and replace it with simple connectivity of $M$ to obtain the same conclusion.

The proof of Theorem \ref{massam} is based on a doubling procedure in so called pseudospherical coordinates, where the data are reflected across the outermost minimal surface. This requires axisymmetry of the outermost minimal surface, a fact that does not appear to exist within the literature and is thus proven below in Proposition \ref{axi}. This result is of independent interest as it may be applied elsewhere, for example to extensions of the Penrose inequality that include contributions from angular momentum.

We are able to extend Theorem \ref{massam} to allow for certain types of multiple black holes by including a mixture of boundary components and extra asymptotically flat ends as well as asymptotically cylindrical ends (see \cite{KhuriWeinstein}). However, as in the case of a complete, simply connected initial data set, the presence of multiple black holes does not immediately yield an explicit expression for the mass lower bound \cite{ChruscielLiWeinstein,KhuriWeinstein}. Rather, the lower bound is given in terms of the reduced harmonic energy of a Weinstein stationary solution \cite{weinstein96} to the Einstein-Maxwell equations having the same angular momentum and charge for each black hole. This harmonic energy is denoted by $\mathcal{F}$, and is a function of the angular momenta and charge. It is conjectured that the resulting inequality coincides with the expression \eqref{3} in which $\mathcal{J}$ and $Q$ are the sums of the angular momenta and charge from the different horizon components.

\begin{theorem}\label{thm2}
Let $(M,g,k,E,B)$ be an axisymmetric, maximal, asymptotically flat initial data set for the Einstein-Maxwell equations having a minimal surface boundary and a finite number of additional ends each of which is asymptotically flat or asymptotically cylindrical. Assume further that $\mu_{em}\geq 0$ in addition to $J_{em}(\eta)=\operatorname{div}E=\operatorname{div}B=0$.
If either
\begin{itemize}
\item [(i)] at most one component of the outermost minimal surface encloses components of the boundary $\partial M$ or nonsimply connected domains, or\smallskip

\item [(ii)] the boundary $\partial M$ has one component and $M$ is simply connected,
\end{itemize}
then
\begin{equation}\label{3.1}
m\geq\mathcal{F}(\mathcal{J}_{1},\ldots,\mathcal{J}_{N},Q_{e}^{1},\ldots,Q_{e}^{N},
Q_{b}^{1},\ldots,Q_{b}^{N})
\end{equation}
where $N$ is the combined number of additional ends and components of $\partial M$ and $\mathcal{J}_i$, $Q_{e}^i$, $Q_{b}^i$ represent the angular momentum and charge associated with each of these ends and boundary components.
\end{theorem}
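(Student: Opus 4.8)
The plan is to reduce the statement to the complete, multi-ended situation already analyzed in \cite{ChruscielLiWeinstein,KhuriWeinstein}, eliminating the minimal surface boundary by the doubling construction employed for Theorem \ref{massam}. First I would invoke \cite{Eichmair} to produce the outermost minimal surface $\Sigma$ separating the boundary $\partial M$ and the additional ends from the designated asymptotically flat end, and appeal to Proposition \ref{axi} to conclude that $\Sigma$ is axisymmetric. Reflecting the region exterior to $\Sigma$ across it in pseudospherical coordinates then yields a complete manifold $\widehat{M}$ without boundary carrying a second asymptotically flat end; because $\Sigma$ is minimal the doubled metric retains nonnegative scalar curvature in the weak sense, and axisymmetry together with the conditions $J_{em}(\eta)=\operatorname{div}E=\operatorname{div}B=0$ survive the reflection. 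The angular momenta $\mathcal{J}_i$ and charges $Q_e^i,Q_b^i$ of the individual black holes are recorded as the prescribed singular (residue) data along the axis $\Gamma$, while the ADM mass of the distinguished end is unchanged.

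Second, hypotheses (i) and (ii) are exactly what is needed to guarantee that $\widehat{M}$ admits globally defined pseudospherical (Brill-type) coordinates together with single-valued twist and electromagnetic potentials. Under (ii) the double of a simply connected manifold with a single boundary component is again of the requisite type, so the classical construction applies verbatim. Under (i), all nontrivial topology and all boundary components are confined to a single component of the outermost minimal surface; consequently the relevant first cohomology of the orbit space $\widehat{M}/U(1)$ vanishes on the region governing the mass, so the closed twist one-form $\omega$ (closed precisely because $J_{em}(\eta)=0$) and the Maxwell potentials integrate to global functions and the coordinate system extends.

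Third comes the first step of the two-step procedure: in these coordinates the ADM mass is expressed as a bulk integral, and combining $\mu_{em}\ge 0$ with the squared-gradient structure of the integrand yields the lower bound $m\ge\mathcal{E}(\Phi)$, where $\Phi:\widehat{M}\setminus\Gamma\to\H^{2}_{\C}$ is the map assembled from the norm of the Killing field, the twist potential, and the electromagnetic potentials, and $\mathcal{E}$ is its renormalized Dirichlet energy. The completeness gained through doubling, together with the prescribed asymptotics of the flat and cylindrical ends, ensures that no boundary terms survive in this identity, exactly as in the proof of Theorem \ref{massam}.

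Finally, the second step identifies $\mathcal{E}(\Phi)\ge\mathcal{F}(\mathcal{J}_1,\dots,\mathcal{J}_N,Q_e^1,\dots,Q_b^N)$. By construction $\Phi$ belongs to the admissible class of axisymmetric maps with the prescribed singular data along $\Gamma$ encoding the angular momenta and charges of the $N$ black holes, and $\mathcal{F}$ is the infimum of the renormalized energy over this class, attained by the Weinstein stationary solution \cite{weinstein96}; since the target $\H^{2}_{\C}$ is nonpositively curved, $\mathcal{E}$ is convex along geodesic homotopies, so the Weinstein critical map is the global minimizer and the inequality follows. Combining the two steps gives $m\ge\mathcal{E}(\Phi)\ge\mathcal{F}$. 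The main obstacle is the second paragraph: verifying that hypotheses (i) and (ii) genuinely suffice to produce global coordinates and single-valued potentials on $\widehat{M}$, and that the enclosed topology permitted by (i) introduces no uncontrolled boundary terms into the mass identity. This is precisely the point at which the simple-connectivity assumption of the earlier works is relaxed.
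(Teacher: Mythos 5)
Your outline of case (ii) is essentially the paper's argument: double across the single boundary sphere in pseudospherical coordinates, derive $m\geq\mathcal{I}(\Psi)$ from the bulk mass formula together with the scalar curvature lower bound, and then invoke the existence of the minimizing renormalized harmonic map and the associated gap bound from \cite{KhuriWeinstein} to get $\mathcal{I}(\Psi)\geq\mathcal{F}$. One correction even here: the doubling surface is the boundary $\partial M$ itself, identified with the coordinate sphere $S_{m_1/2}$, not the outermost minimal surface; under (ii) nothing forces $\partial M$ to be outermost, and nothing controls the number of components of the outermost minimal surface, so your construction should never mention $\Sigma$ in this case.

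The genuine gap is in case (i). There the outermost minimal surface $\Sigma$ may be disconnected, and your plan---reflect the region exterior to all of $\Sigma$ across $\Sigma$---cannot be carried out in this framework, for three reasons. First, the doubling is not an abstract reflection across a minimal surface: it is a conformal inversion performed in pseudospherical coordinates, which requires the reflecting surface to be a \emph{single} coordinate sphere; no such coordinate system exists when the exterior region has several boundary components. Second, even granting some doubling, the double of an exterior region in $\mathbb{R}^3$ across two or more boundary spheres is not simply connected (a loop entering through one neck and returning through another is noncontractible), so global Brill-type coordinates and single-valued twist and electromagnetic potentials cannot exist on your $\widehat{M}$; this is precisely what your second paragraph proposes to verify, and for your construction it is false---the cohomological obstruction is real, it does not vanish. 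Third, discarding the interiors of all components of $\Sigma$ deletes the additional ends enclosed by the components other than $\Sigma_1$, and with them their data $\mathcal{J}_i$, $Q_e^i$, $Q_b^i$; even if the analysis went through, you would obtain $\mathcal{F}$ evaluated on a proper subset of the arguments, not inequality \eqref{3.1}. (Your preliminary claim that $\Sigma$ separates all additional ends from the designated end is also false in general: asymptotically cylindrical ends, modeled on extreme horizons, need not be enclosed by any compact minimal surface.) The paper's resolution is different and avoids all three problems: under hypothesis (i) one cuts only along the single component $\Sigma_1$ that encloses boundary components or nonsimply connected domains, and keeps everything outside it---including the remaining components of $\Sigma$ and the ends they enclose, which survive as punctures. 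By Proposition \ref{axi} the surface $\Sigma_1$ is axisymmetric, so the region $M_1$ exterior to it is an axisymmetric, simply connected manifold with a single minimal boundary component and the same collection of additional ends; it therefore satisfies the hypotheses of case (ii), and case (i) is reduced to case (ii) with no multi-component doubling ever performed.
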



\section{The Doubling Procedure in Pseudospherical Coordinates}
\label{sec2} \setcounter{equation}{0}
\setcounter{section}{2}

Consider the setting of case $(ii)$ in Theorem \ref{massam} where $(M,g)$ is axisymmetric, asymptotically flat, and simply connected with a single component minimal surface boundary. It follows from \cite[Theorem 2.2]{ChruscielNguyen} that $M$ is diffeomorphic to $\mathbb{R}^3\setminus B_{m_{1}/2}(0)$, and there exists a global system of cylindrical-type coordinates $(\rho,z,\phi)$ on this domain such that the metric takes the form
\begin{equation}\label{metric}
g=e^{-2U+2\alpha}(d\rho^2+dz^2)+\rho^2 e^{-2U}(d\phi+A_{\rho} d\rho +A_{z} dz)^2,
\end{equation}
where the Killing field $\eta=\partial_{\phi}$. The isothermal part of \eqref{metric} is the metric on the orbit space $M/U(1)$, and the remaining part arises from the dual 1-form $\rho^2 e^{-2U}(d\phi +A_{\rho} d\rho +A_{z} dz)$ to the Killing field. This structure for $g$ in pseudospherical coordinates is related to the Weyl-Papapetrou form \cite{ChruscielNguyen} used in the reduction of the stationary axisymmetric vacuum Einstein equations. With the standard transformation $\rho=r\sin\theta$, $z=r\cos\theta$ producing spherical-type coordinates $(r,\theta,\phi)$ with ranges $m_{1}/2\leq r<\infty$, $0\leq\theta\leq \pi$, and $0\leq\phi<2\pi$, the fall-off of the metric coefficients in the asymptotically flat end is given by
\begin{equation}\label{f}
U=o_{\ell-3}(r^{-\frac{1}{2}}),\quad \alpha=o_{\ell-4}(r^{-\frac{1}{2}}),\quad
A_{\rho}=\rho o_{\ell-3}(r^{-\frac{5}{2}}),\quad A_{z}=o_{\ell-3}(r^{-\frac{3}{2}}).
\end{equation}
Furthermore $\alpha=0$ on the axis $\rho=0$, and all coefficients are independent of $\phi$. Note also that the value $m_1>0$ is uniquely determined, and the existence of pseudospherical coordinates does not require the boundary $\partial M$ to be minimal. Since the mean curvature of a coordinate sphere $S_r$ is
\begin{equation}
H=\frac{2/r+\partial_{r}(\alpha-2U)}{\sqrt{e^{-2U+2\alpha}+\rho^2 e^{-2U} A_{r}^2}}
\end{equation}
where $A_{r}=\sin\theta A_{\rho}+\cos\theta A_{z}$, the assumption of a minimal boundary $\partial M=S_{m_{1}/2}$ is equivalent to
\begin{equation}\label{minimal}
\partial_{r}\left(U-\frac{1}{2}\alpha\right)=\frac{2}{m_1}.
\end{equation}

A particularly advantageous feature of the metric structure \eqref{metric} is the simple expression obtained for the scalar curvature \cite{Dain0}
\begin{equation}\label{scalar}
2e^{-2U+2\alpha}R=8\Delta U-4\Delta_{\rho,z}\alpha-4|\nabla U|^{2}-\rho^{2}e^{-2\alpha}
\left(A_{\rho,z}-A_{z,\rho}\right)^{2},
\end{equation}
where $\Delta$ is the Laplacian on $\mathbb{R}^{3}$ with respect to the flat metric $\delta=d\rho^2+dz^2+\rho^2 d\phi^2$ and $\Delta_{\rho,z}=\partial_{\rho}^{2}+\partial_{z}^{2}$. Moreover the constraint equation \eqref{1}, and the assumptions of a maximal slice $\tr k=0$ and nonnegative energy density $\mu_{em}\geq 0$ imply that
\begin{align}\label{scalar1}
\begin{split}
R=&16\pi\mu_{em}+|k|^{2}+2\left(|E|^{2}+|B|^{2}\right)\\
\geq
&2\frac{e^{6U-2\alpha}}{\rho^{4}}|\nabla v
+\chi\nabla\psi-\psi\nabla\chi|^{2}
+2\frac{e^{4U-2\alpha}}{\rho^{2}}\left(|\nabla\chi|^{2}
+|\nabla\psi|^{2}\right),
\end{split}
\end{align}
where $v$, $\chi$, and $\psi$ are potential functions for angular momentum, electric charge, and magnetic charge respectively. More precisely, the divergence free property of the electric and magnetic fields combined with Cartan's magic formula shows that the 1-forms $\iota_{\eta}\star E$ and $\iota_{\eta}\star B$ are closed, where $\iota$ and $\star$ denote interior product and the Hodge star operation. Hence simple connectivity yields global potentials satisfying
\begin{equation}\label{empotential}
d\chi=\iota_{\eta}\star E,\quad\quad\quad d\psi=\iota_{\eta}\star B.
\end{equation}
Furthermore, as shown in \cite{KhuriWeinstein} the 1-form $\star( k(\eta)\wedge \eta)-\chi d\psi+\psi d\chi$ is closed exactly when $J_{em}(\eta)=0$. Therefore under the hypotheses of Theorem \ref{massam} there exists a global twist potential satisfying
\begin{equation}\label{twistpotential}
dv=\star( k(\eta)\wedge \eta)-\chi d\psi+\psi d\chi.
\end{equation}
The inequality in \eqref{scalar1} then follows in a straightforward way from \eqref{empotential} and \eqref{twistpotential}.  Moreover, if
$\omega=dv+\chi d\psi-\psi d\chi$ then asymptotics \cite{KhuriWeinstein} for the potentials are expressed by
\begin{equation}\label{asym}
|\omega|=\rho^{2}O(r^{-\lambda}),\text{ }\text{ }\text{ }\text{ }|\nabla\chi|+|\nabla\psi|=\rho O(r^{-\lambda})\text{
}\text{ }\text{ as }\text{ }\text{ }r\rightarrow\infty,
\end{equation}
\begin{equation}\label{asym.1}
|\omega|=O(\rho^{2}),\text{ }\text{ }\text{ }\text{ }|\nabla\chi|+|\nabla\psi|=O(\rho) \text{ }\text{ }\text{ as }\text{
}\text{ }\rho\rightarrow 0\text{ }\text{ }\text{ in }\text{ }\text{ }\mathbb{R}^3\setminus B_{m_1/2}(0).
\end{equation}
In addition, since $|\eta|=0$ on the $z$-axis all the potential functions are constant there, and the difference of these constants associated with the two connected components $I_{+}=\{\rho=0, z> m_{1}/2\}$ and $I_{-}=\{\rho=0, z<-m_{1}/2\}$ of the axis yield the angular momentum and charges
\begin{equation}\label{potentialcharge}
\mathcal{J}=\frac{1}{4}\left(v|_{I_{-}}-v|_{I_{+}}\right),\quad\quad
Q_e=\frac{1}{2}\left(\chi|_{I_{-}}-\chi|_{I_{+}}\right),\quad\quad
Q_b=\frac{1}{2}\left(\psi|_{I_{-}}-\psi|_{I_{+}}\right).
\end{equation}

Typically a mass lower bound in terms of a harmonic map energy is obtained by
integrating \eqref{scalar} over $M$ and applying a version of \eqref{scalar1}. This works well when $(M,g)$ is complete, however here the presence of a boundary leads to boundary terms which are not desirable when minimizing the harmonic map energy. Therefore we seek to double the manifold across its boundary, and show that the same strategy may be carried out on the doubled manifold with two ends. The primary difficulty arises from the lack of regularity across the doubling surface. Nevertheless we show that the minimal surface hypothesis is sufficient for the argument to go through.

Consider the conformal map $f:B_{m_1/2}\setminus\{0\}\rightarrow \mathbb{R}^3\setminus B_{m_1/2}$ given by spherical inversion
\begin{equation}
f(\tilde{r},\tilde{\theta},\tilde{\phi})=\left(\left(\frac{m_1}{2}\right)^2\frac{1}{\tilde{r}},
\tilde{\theta},\tilde{\phi}\right),
\end{equation}
which is expressed in cylindrical coordinates as
\begin{equation}\label{b}
\rho=\left(\frac{m_1}{2}\right)^2\frac{\tilde{\rho}}{\tilde{r}^2},\quad\quad\quad
z=\left(\frac{m_1}{2}\right)^2\frac{\tilde{z}}{\tilde{r}^2},\quad\quad\quad
\phi=\tilde{\phi}.
\end{equation}
Pulling back the metric to $B_{m_1/2}\setminus\{0\}$ yields
\begin{equation}\label{newmetric}
\tilde{g}:=f^{*}g=e^{-2\tilde{U}+2\tilde{\alpha}}
(d\tilde{\rho}^2+d\tilde{z}^2)+\tilde{\rho}^2 e^{-2\tilde{U}}(d\tilde{\phi}
+\tilde{A}_{\rho} d\tilde{\rho} +\tilde{A}_{z} d\tilde{z})^2,
\end{equation}
where
\begin{equation}\label{a}
\tilde{U}=2\log\tilde{r}+2\log(2/m_1)+U\circ f,\quad\quad\quad\quad
\tilde{\alpha}=\alpha\circ f,
\end{equation}
\begin{equation}
\tilde{A}_{\rho}=\tilde{r}^{-4}\left(\frac{2}{m_1}\right)^2
\left[(\tilde{z}^2-\tilde{\rho}^2)A_{\rho}-2\tilde{\rho}A_{z}\right],\quad\quad
\tilde{A}_{z}=\tilde{r}^{-4}\left(\frac{2}{m_1}\right)^2
\left[(\tilde{\rho}^2-\tilde{z}^2)A_{z}-2\tilde{z}A_{\rho}\right].
\end{equation}
This leads to a metric and potentials globally defined on the complement of the origin
\begin{equation}
\bar{g}=
  \begin{cases}
  g & \text{on $\mathbb{R}^3\setminus B_{m_1/2}$,} \\
  \tilde{g} & \text{on $B_{m_1/2}\setminus\{0\}$.}
  \end{cases}
\end{equation}
Similarly, the potentials may also be extended to the ball by setting $\tilde{v}=v\circ f$, $\tilde{\chi}=\chi\circ f$, $\tilde{\psi}=\psi\circ f$ in $B_{m_1/2}\setminus\{0\}$, and the corresponding functions defined on $\mathbb{R}^3\setminus\{0\}$ will be denoted $\bar{v}$, $\bar{\chi}$, and $\bar{\psi}$. These functions and the metric $\bar{g}$ are $C^{0,1}$ and smooth away from the reflection sphere $S_{m_1/2}$.

The form of the metric \eqref{newmetric} guarantees that the scalar curvature of $\bar{g}$ satisfies the equation \eqref{scalar} on all of $\mathbb{R}^3\setminus\{0\}$. Moreover it also satisfies the lower bound in \eqref{scalar1}. To see this observe that
\begin{equation}
|\nabla\chi|^2\circ f=
(\partial_{\rho}\chi)^2\circ f+(\partial_{z}\chi)^2\circ f
=\left(\frac{2}{m_1}\right)^4\tilde{r}^4\left[(\partial_{\tilde{\rho}}\tilde{\chi})^2
+(\partial_{\tilde{z}}\tilde{\chi})^2\right]
=\left(\frac{2}{m_1}\right)^4\tilde{r}^4|\tilde{\nabla}\tilde{\chi}|^2,
\end{equation}
and similarly
\begin{equation}
|\nabla v+\chi\nabla\psi-\psi\nabla\chi|^2\circ f
=\left(\frac{2}{m_1}\right)^4\tilde{r}^4|\tilde{\nabla} \tilde{v}+\tilde{\chi}\tilde{\nabla}\tilde{\psi}-\tilde{\psi}\tilde{\nabla}
\tilde{\chi}|^2.
\end{equation}
Combining this with \eqref{scalar1}, \eqref{b}, and \eqref{a} shows that
in $B_{m_1/2}\setminus\{0\}$
\begin{align}\label{scalarfinal}
\begin{split}
\tilde{R}=&R\circ f\\
\geq
&2\frac{e^{6U\circ f-2\alpha\circ f}}{(\rho\circ f)^{4}}|\nabla v
+\chi\nabla\psi-\psi\nabla\chi|^{2}\circ f
+2\frac{e^{4U\circ f-2\alpha\circ f}}{(\rho\circ f)^{2}}\left(|\nabla\chi|^{2}\circ f
+|\nabla\psi|^{2}\circ f\right)\\
=& 2\frac{e^{6\tilde{U}-2\tilde{\alpha}}}{\tilde{\rho}^{4}}|\tilde{\nabla} \tilde{v}
+\tilde{\chi}\tilde{\nabla}\tilde{\psi}-\tilde{\psi}\tilde{\nabla}\tilde{\chi}|^{2}
+2\frac{e^{4\tilde{U}-2\tilde{\alpha}}}{\tilde{\rho}^{2}}
\left(|\tilde{\nabla}\tilde{\chi}|^{2}
+|\tilde{\nabla}\tilde{\psi}|^{2}\right).
\end{split}
\end{align}
It follows that the scalar curvature of the doubled metric $\bar{g}$ satisfies the desired lower bound on $\mathbb{R}^3\setminus\{0\}$ away from the sphere $S_{m_1/2}$.
Although the metric is not sufficiently regular across this sphere to have a pointwise defined scalar curvature on this surface, the fact that it is a minimal surface with respect to both inner and outer domains guarantees that $\bar{R}$ satisfies the inequality distributionally. Furthermore the minimal surface property allows for the fundamental mass lower bound in terms of scalar curvature, despite the lack of metric regularity.

In order to establish the mass lower bound it is necessary to note that the doubled manifold $(\bar{M},\bar{g})$, where $\bar{M}=\mathbb{R}^{3}\setminus\{0\}$, possesses two asymptotically flat ends. Indeed, at the additional end near the origin the metric coefficients and potentials satisfy the asymptotics
\begin{equation}\label{asym1}
\bar{U}=2\log r+C+o_{1}(r^{\frac{1}{2}}),\text{ }\text{ }\text{ }\text{ }\bar{\alpha}=o_{1}(r^{\frac{1}{2}}),\text{
}\text{ }\text{ }\text{ }\bar{A}_{\rho}=\rho o_{1}(r^{-\frac{5}{2}}),\text{ }\text{ }\text{ }\text{
}\bar{A}_{z}=o_{1}(r^{-\frac{3}{2}}),
\end{equation}
\begin{equation}\label{asym2}
|\bar{\omega}|=\rho^{2}O(r^{\lambda-6}),\text{ }\text{ }\text{ }|\nabla\bar{\chi}|+|\nabla\bar{\psi}|=\rho O(r^{\lambda-4})\text{
}\text{ }\text{ as }\text{ }\text{ }r\rightarrow 0,
\end{equation}
for some constant $C$. Here and in what follows, unless stated otherwise, the tilde notation will be removed from coordinates within the domain $B_{m_1/2}\setminus\{0\}$.

\begin{lemma}\label{lemma1}
The doubled manifold $(\bar{M},\bar{g})$ possesses two asymptotically flat ends, and the mass is given by
\begin{equation}\label{masslower}
m
=\frac{1}{32\pi}\int_{\mathbb{R}^{3}}\left(2e^{-2\bar{U}+2\bar{\alpha}}\!\text{ }\bar{R}
+4|\nabla \bar{U}|^{2}
+\rho^{2}e^{-2\bar{\alpha}}
(\bar{A}_{\rho,z}-\bar{A}_{z,\rho})^{2}
\right)dx,
\end{equation}
where $dx$ is the Euclidean volume element.
\end{lemma}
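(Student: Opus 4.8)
The plan is to prove the two assertions in turn, obtaining the mass formula by inserting the scalar curvature identity \eqref{scalar} into \eqref{masslower} and then converting everything to boundary integrals, where the reflection sphere $S_{m_1/2}$ will be handled by the minimal surface condition \eqref{minimal}.

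For the two ends, I would note that $f$ is a diffeomorphism of $B_{m_1/2}\setminus\{0\}$ onto $\R^3\setminus B_{m_1/2}$ with $\tilde{g}=f^{*}g$, so $(B_{m_1/2}\setminus\{0\},\tilde{g})$ is isometric to the original asymptotically flat end $(\R^3\setminus B_{m_1/2},g)$, the origin corresponding to spatial infinity under the inverted radial coordinate $s=(m_1/2)^2/r$. This already exhibits the origin as a genuine second asymptotically flat end; re-expressing \eqref{f} in the ball coordinate $r$ through \eqref{a} then produces the stated asymptotics \eqref{asym1}--\eqref{asym2}, the leading term $2\log r$ in $\bar{U}$ being precisely the inversion conformal factor $r^{-4}$ appearing in $e^{-2\bar{U}}$, which is absorbed by passing to $s$.

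For the mass formula I would first substitute \eqref{scalar}, valid for $\bar{g}$ on $\R^3\setminus\{0\}$ away from $S_{m_1/2}$, into the integrand of \eqref{masslower}. The quadratic terms $4|\nabla\bar{U}|^{2}$ and $\rho^{2}e^{-2\bar\alpha}(\bar{A}_{\rho,z}-\bar{A}_{z,\rho})^{2}$ cancel against their counterparts in \eqref{scalar}, leaving the total-derivative integrand $8\Delta\bar{U}-4\Delta_{\rho,z}\bar\alpha$ (the reflection sphere being of measure zero in the volume integral). Since $\bar{g}$ and the potentials are only Lipschitz across $S_{m_1/2}$, I would then apply the divergence theorem separately on $\{r>m_1/2\}$ and $\{0<r<m_1/2\}$, producing contributions at the sphere at infinity, at a small sphere about the origin, along the axis $\rho=0$, and a jump integral across $S_{m_1/2}$; the term $\Delta_{\rho,z}\bar\alpha$ is treated after writing $\rho\Delta_{\rho,z}\bar\alpha=\operatorname{div}_{\rho,z}(\rho\nabla_{\rho,z}\bar\alpha)-\partial_\rho\bar\alpha$, whose pure-derivative part integrates to zero because $\bar\alpha$ vanishes on the axis and at both ends. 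By the standard Brill/Weyl--Papapetrou mass computation \cite{Dain0,KhuriWeinstein}, the leading behavior $U=-m/r+o(r^{-1})$ makes the sphere-at-infinity term equal $\tfrac{1}{4\pi}\int_{S_\infty}\partial_r U\,dS=m$; here \eqref{f} together with $\partial g\in L^{2}$ is exactly what is needed to kill the borderline $\bar\alpha$-contribution at infinity. The origin term vanishes since $\partial_r\bar{U}\sim 2/r$ forces $\int_{S_\epsilon}\partial_r\bar{U}\,dS\sim 8\pi\epsilon\to 0$, and the axis terms vanish as $\bar\alpha=0$ there.

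The crux, and the only place the minimal surface hypothesis enters, is the jump across $S_{m_1/2}$. Differentiating \eqref{a} and using $df/dr=-1$ at $r=m_1/2$ gives $\partial_r\tilde{U}-\partial_r U=\tfrac{4}{m_1}-2\partial_r U$ and $\partial_r\tilde{\alpha}-\partial_r\alpha=-2\partial_r\alpha$ on the sphere, so the jump contribution to \eqref{masslower} is
\[
\frac{1}{32\pi}\int_{S_{m_1/2}}\left[8(\partial_r\tilde{U}-\partial_r U)-4(\partial_r\tilde{\alpha}-\partial_r\alpha)\right]dS
=\frac{1}{2\pi}\int_{S_{m_1/2}}\left[\frac{2}{m_1}-\partial_r\left(U-\tfrac{1}{2}\alpha\right)\right]dS,
\]
which vanishes identically by the minimal surface condition \eqref{minimal}. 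Thus all boundary terms away from infinity drop out and the bulk integral equals $m$. I expect this cancellation to be the main obstacle: besides the chain-rule bookkeeping through the inversion $f$, it requires verifying that no singular contribution to $\bar{R}$ survives on $S_{m_1/2}$, which is guaranteed precisely because the mean curvatures from the two sides both vanish and hence match, so that the distributional scalar curvature has no concentrated part on the reflection sphere.
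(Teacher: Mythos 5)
Your treatment of the reflection sphere is correct and coincides with the paper's key step: the jump integrand $8(\partial_r\tilde U-\partial_r U)-4(\partial_r\tilde\alpha-\partial_r\alpha)=16\left[\tfrac{2}{m_1}-\partial_r\left(U-\tfrac{1}{2}\alpha\right)\right]$ vanishes by \eqref{minimal}, which is exactly the paper's observation that $\bar U-\tfrac{1}{2}\bar\alpha$ is $C^{1,1}$ across $S_{m_1/2}$; the origin estimate and the two-ends statement are also handled as in the paper. The genuine gap is your treatment of spatial infinity. You invoke the expansion $U=-m/r+o(r^{-1})$ and claim that the ``borderline $\bar\alpha$-contribution'' at infinity is killed, but the hypotheses \eqref{falloff1} and \eqref{f} give only $U,\alpha=o(r^{-\frac{1}{2}})$ together with $\partial g\in L^{2}$; no $1/r$-expansion of $U$ is available, and under this weak decay the naive flux $\int_{S_r}\partial_r\bar U$ alone need not converge to $4\pi m$, nor do the $\bar\alpha$-terms at infinity vanish: on the cylinder wall one only gets $\bigl|\int_{W_r}\bar\alpha/\rho\bigr|=o(r^{\frac{1}{2}})$, and $\int_{S_r}\partial_r\bar\alpha$ is similarly borderline, so neither can be discarded.

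This is precisely why the paper does not use the naive ADM flux but instead cites the mass formula \eqref{fk} of Chru\'{s}ciel \cite{ch1}, valid under exactly these weak asymptotics, in which the mass is the limit of the combination $\int_{S_r}\partial_r\bigl(\bar U-\tfrac{1}{2}\bar\alpha\bigr)+\tfrac{1}{2}\int_{W_r}\bar\alpha/\rho$. The wall term is not an error term to be removed: it survives in the limit as the bulk integral $\int_{\R^3}\rho^{-1}\partial_\rho\bar\alpha\,dx$ (equation \eqref{al}), and it is exactly this contribution that converts $\Delta\bigl(\bar U-\tfrac{1}{2}\bar\alpha\bigr)$ into the integrand $\Delta\bar U-\tfrac{1}{2}\Delta_{\rho,z}\bar\alpha$ of \eqref{masslower} via \eqref{uu}. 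Relatedly, your claim that the $\partial_\rho\bar\alpha$ term ``integrates to zero because $\bar\alpha$ vanishes on the axis and at both ends'' is a Fubini interchange applied to an integral that is only conditionally convergent: $\partial_\rho\bar\alpha=o(r^{-\frac{3}{2}})$ need not be absolutely integrable over the half-plane, and the value of $\int\rho^{-1}\partial_\rho\bar\alpha\,dx$ computed over an exhaustion by balls or cylinders is precisely the nonvanishing wall term. To close the gap, replace your analysis at infinity with the citation of \eqref{fk} and apply the divergence theorem to the combination $\bar U-\tfrac{1}{2}\bar\alpha$, exactly as in the paper; your jump cancellation at $S_{m_1/2}$ and your origin estimate then go through unchanged.
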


\begin{proof}
Although the metric $\bar{g}$ is only Lipschitz across $S_{m_1/2}$, the fact that this sphere is a minimal surface guarantees that a particular combination of coefficients has improved regularity, namely $\bar{U}-\frac{1}{2}\bar{\alpha}\in C^{1,1}$. Moreover, this is all that is needed to establish \eqref{masslower}.

First observe that in light of \eqref{minimal}
\begin{equation}\label{agree1}
\lim_{r\rightarrow \frac{m_1}{2}^+} \partial_{r}\left(\bar{U}-\frac{1}{2}\bar{\alpha}\right)=\frac{2}{m_1}.
\end{equation}
It suffices then to show that the limit from inside $B_{m_1/2}$ yields the same value. For emphasis we will use the tilde notation to perform this computation. By
\eqref{a}
\begin{equation}
\partial_{\tilde{r}}\tilde{U}=\frac{2}{\tilde{r}}+\partial_{r}U\frac{\partial r}{\partial\tilde{r}}
=\frac{2}{\tilde{r}}-\left(\frac{m_1}{2}\right)^2 \frac{1}{\tilde{r}^2}\partial_{r}U.
\end{equation}
Therefore \eqref{minimal} implies
\begin{equation}
\partial_{\tilde{r}}\tilde{U}|_{\tilde{r}=\frac{m_1}{2}}=
\frac{4}{m_1}-\partial_{r}U|_{r=\frac{m_1}{2}}
=\frac{2}{m_1}-\frac{1}{2}\partial_{r}\alpha|_{r=\frac{m_1}{2}}.
\end{equation}
On the other hand
\begin{equation}
\partial_{\tilde{r}}\tilde{\alpha}=-\left(\frac{m_1}{2}\right)^2 \frac{1}{\tilde{r}^2}\partial_{r}\alpha,
\end{equation}
and therefore the desired conclusion follows
\begin{equation}\label{agree}
\lim_{r\rightarrow \frac{m_1}{2}^-} \partial_{r}\left(\bar{U}-\frac{1}{2}\bar{\alpha}\right)=\frac{2}{m_1}.
\end{equation}

We will now show that \eqref{masslower} holds. According to \cite{ch1}
\begin{equation}\label{fk}
m=\lim_{r\to \infty}\frac{1}{4\pi}\left(\int_{S_r}\partial_r\left(\bar{U}
-\frac{1}{2}\bar{\alpha}\right)+\frac{1}{2}\int_{W_r}\frac{\bar{\alpha}}{\rho} \right),
\end{equation}
where $W_r=\{\rho=r, -r<z<r\}$ is the wall of the cylinder or radius $r$.
Next observe that \eqref{agree1} and \eqref{agree} yield
\begin{align}\label{ibp}
\begin{split}
\int_{S_r}\partial_r\left(\bar{U}-\frac{1}{2}\bar{\alpha}\right) =&\int_{B_r\setminus B_{m_1/2}}\Delta\left(\bar{U}-\frac{1}{2}\bar{\alpha}\right)dx +\lim_{r\rightarrow \frac{m_1}{2}^+}\int_{S_{r}}\partial_{r}\left( \bar{U}-\frac{1}{2}\bar{\alpha}\right)
\\
=&\int_{B_r\setminus B_{m_1/2}}\Delta\left(\bar{U}-\frac{1}{2}\bar{\alpha}\right)dx +
\lim_{r\rightarrow \frac{m_1}{2}^-}\int_{S_{r}}\partial_{r}\left( \bar{U}-\frac{1}{2}\bar{\alpha}\right)
\\
=&\int_{B_r\setminus B_{m_1/2}}\Delta\left(\bar{U}-\frac{1}{2}\bar{\alpha}\right)dx
+\int_{B_{m_1/2}}\Delta\left(\bar{U}-\frac{1}{2}\bar{\alpha}\right)dx,
\end{split}
\end{align}
since the asymptotics \eqref{asym1} show that
\begin{equation}
\lim_{r\rightarrow 0}\int_{S_{r}}\partial_{r}\left( \bar{U}-\frac{1}{2}\bar{\alpha}\right)=0.
\end{equation}
Moreover, since $\bar{\alpha}$ is continuous across $S_{m_1/2}$, vanishes away from the origin on the $z$-axis, and satisfies \eqref{asym1}
\begin{equation}\label{al}
\lim_{r\rightarrow\infty}\int_{W_r}\frac{\bar{\alpha}}{\rho}
=\int_{\mathbb{R}^3}\frac{1}{\rho}\partial_{\rho}\bar{\alpha}dx.
\end{equation}
Finally, since \eqref{scalar} holds globally on $\bar{M}$ we have
\begin{equation}\label{uu}
\Delta\left(\bar{U}-\frac{1}{2}\bar{\alpha}\right)
+\frac{1}{2\rho}\partial_{\rho}\bar{\alpha}
=\Delta\bar{U}-\frac{1}{2}\Delta_{\rho,z}\bar{\alpha}
=\frac{1}{4}e^{-2\bar{U}+2\bar{\alpha}}\bar{R}
+\frac{1}{2}|\nabla\bar{U}|^2
+\frac{1}{8}\rho^2 e^{-2\bar{\alpha}}
\left(\bar{A}_{\rho,z}-\bar{A}_{z,\rho}\right)^2.
\end{equation}
The desired mass formula \eqref{masslower} now follows by combining
\eqref{fk}, \eqref{ibp}, \eqref{al}, and \eqref{uu}.
\end{proof}

Lemma \ref{lemma1} relates the mass to an energy functional with the help of
\eqref{scalar1} and \eqref{scalarfinal}. Namely together they imply
\begin{equation}\label{190}
m\geq\mathcal{I}(\Psi),
\end{equation}
where $\Psi=(\bar{U},\bar{v},\bar{\chi},\bar{\psi})$ and
\begin{equation}\label{energy}
\mathcal{I}(\Psi)
=\frac{1}{8\pi}\int_{\mathbb{R}^{3}}\left(|\nabla \bar{U}|^{2}
+\frac{e^{4\bar{U}}}{\rho^{4}}|\nabla \bar{v}
+\bar{\chi}\nabla\bar{\psi}-\bar{\psi}\nabla\bar{\chi}|^{2}
+\frac{e^{2\bar{U}}}{\rho^{2}}\left(|\nabla\bar{\chi}|^{2}
+|\nabla\bar{\psi}|^{2}\right)\right)dx.
\end{equation}
The functional $\mathcal{I}$ may be interpreted as the reduced harmonic energy \cite{KhuriWeinstein} for maps $\Psi:\mathbb{R}^3\setminus\{0\}\rightarrow
\mathbb{H}_{\mathbb{C}}^2$ into the complex hyperbolic plane.
Note that the asymptotics \eqref{f}, \eqref{asym}, \eqref{asym1}, and \eqref{asym2} guarantee that $\mathcal{I}(\Psi)$ is finite precisely when $\lambda>\frac{3}{2}$.

\begin{proof}[Proof of Theorem \ref{massam} $(ii)$]
Since the map $\Psi$ is smooth away from the sphere $S_{m_1/2}$, Lipschitz across this surface, and satisfies the asymptotics \eqref{f}, \eqref{asym}, \eqref{asym.1}, \eqref{asym1}, \eqref{asym2}, the gap bound of Schoen and Zhou \cite{SchoenZhou} applies to yield
\begin{equation}\label{gapbound}
\mathcal{I}(\Psi)-\mathcal{I}(\Psi_{0})
\geq C\left(\int_{\mathbb{R}^{3}}
\operatorname{dist}_{\mathbb{H}_{\mathbb{C}}^{2}}^{6}(\Psi,\Psi_{0})dx
\right)^{\frac{1}{3}},
\end{equation}
where $\Psi_0$ is the renormalized harmonic map associated with the extreme Kerr-Newman black hole possessing the same angular momentum and charge as $\Psi$,
and $\operatorname{dist}_{\mathbb{H}_{\mathbb{C}}}$ denotes distance in the complex hyperbolic plane. In particular, together with \eqref{190} we obtain
\begin{equation}
m\geq\mathcal{I}(\Psi_0).
\end{equation}
The desired inequality \eqref{3} now follows since
\begin{equation}\label{kerrnewman}
\mathcal{I}(\Psi_0)^2=\frac{Q^{2}+\sqrt{Q^{4}+4\mathcal{J}^{2}}}{2}.
\end{equation}

Consider now the case in which equality holds in \eqref{3}. This implies, with the help of \eqref{190}, \eqref{gapbound}, and \eqref{kerrnewman}, that $\Psi=\Psi_0$ and in particular $U=U_0$. However this is a contradiction since the asymptotics  \eqref{asym1} show that $U=2\log r+O(1)$ as $r\rightarrow 0$, whereas the corresponding asymptotics for the extreme Kerr-Newman map are given by $U_0=\log r +O(1)$. This difference arises from the fact that $\Psi$ arises from an asymptotically flat geometry near the origin, while the extreme Kerr-Newman initial data possess instead an asymptotically cylindrical end in this location.
\end{proof}

\section{The Outermost Minimal Surface in Axisymmetry}
\label{sec3} \setcounter{equation}{0}
\setcounter{section}{3}

Let $(M,g)$ be as in case $(i)$ in Theorem \ref{massam}. Since this manifold is asymptotically flat and possesses a minimal surface boundary, there exists a unique outermost minimal surface \cite{Eichmair,HuiskenIlmanen} which is a compact embedded smooth hypersurface $\Sigma$. The term outermost refers to the fact that there are no other minimal surfaces homologous to $\Sigma$ which lie outside it with respect to the asymptotic end. The set $M\setminus \Sigma$ consists of one unbounded component, and perhaps several bounded components the union of which will be denoted by $\Omega$, so that $\partial\Omega=\Sigma$.  Each component of the outermost minimal surface must be a topological 2-sphere \cite{Galloway}, since $\Sigma$ is outer area minimizing in that it has the least area among all surfaces which enclose it. Furthermore, according to \cite[Lemma 4.1]{HuiskenIlmanen} $M\setminus \Sigma$ is diffeomorphic to the complement of a finite number of open 3-balls in $\mathbb{R}^3$ with disjoint closure. Here we show that the property of axisymmetry for the ambient manifold descends to $\Sigma$.

\begin{prop}\label{axi}
If $(M,g)$ is axisymmetric then the outermost minimal surface is also axisymmetric.
\end{prop}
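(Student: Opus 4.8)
The plan is to exploit the uniqueness of the outermost minimal surface together with the fact that the $U(1)$ action consists of isometries. First I would recall that the one-parameter group of isometries $\{\varphi_t\}$ generated by the Killing field $\eta$ acts on $(M,g)$ by rotations which, in the single asymptotically flat end, are asymptotic to Euclidean rotations about the axis. In particular each $\varphi_t$ is a diffeomorphism of $M$ that maps the boundary $\partial M$ to itself and maps the asymptotic end to itself, acting there asymptotically as a rotation; consequently it preserves the coordinate spheres $S_r$ and hence the notion of \emph{outside} relative to that end.

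The key observation is that isometries carry minimal surfaces to minimal surfaces, since the mean curvature is an isometric invariant. Thus for each $t$ the image $\varphi_t(\Sigma)$ is again a compact, embedded minimal surface. Because the group $U(1)$ is connected, every $\varphi_t$ is homotopic to the identity, so $\varphi_t(\Sigma)$ is homologous to $\varphi_t(\partial M)=\partial M$ and therefore homologous to $\Sigma$; moreover $\varphi_t$ preserves the relation of enclosing $\partial M$ and lying outermost with respect to the end. Hence $\varphi_t(\Sigma)$ is an outermost minimal surface sharing all the defining properties of $\Sigma$. By the uniqueness of the outermost minimal surface \cite{Eichmair,HuiskenIlmanen}, this forces $\varphi_t(\Sigma)=\Sigma$ for every $t$, which is precisely the statement that $\Sigma$ is axisymmetric.

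The step I expect to require the most care is verifying that each $\varphi_t$ genuinely preserves the outermost property, which reduces to confirming that it fixes the single asymptotically flat end and the induced outer orientation. Since $M$ has exactly one such end and $\eta$ asymptotes to a rotational Killing field there, $\varphi_t$ cannot carry the end elsewhere and must act asymptotically by a Euclidean rotation, so the coordinate spheres and the unbounded component of $M\setminus\Sigma$ are preserved; this is exactly what guarantees that $\varphi_t(\Sigma)$ is again outermost rather than merely minimal. Once this naturality is established, the uniqueness statement closes the argument at once, with no further analysis needed. (A minor refinement, obtained by combining the connectedness of $U(1)$ with the continuity of $t\mapsto\varphi_t(\Sigma)$, shows in addition that each individual component of $\Sigma$ is invariant, since the components cannot be permuted along a path starting at $\varphi_0=\mathrm{id}$.)
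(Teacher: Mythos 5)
Your argument is correct and is essentially the paper's proof: both flow $\Sigma$ by the isometry group generated by $\eta$, use that isometries carry compact embedded minimal surfaces to minimal surfaces while fixing the single asymptotic end, and conclude via the uniqueness of the outermost minimal surface from \cite{Eichmair,HuiskenIlmanen}. The only difference is logical packaging: the paper argues by contradiction, using the trapped-region characterization $\Sigma=\partial\mathcal{T}$ to rule out a piece of $\varphi_{t_0}(\Sigma)$ lying outside $\Sigma$, whereas you verify directly that the outermost property itself is preserved by each $\varphi_t$ and then invoke uniqueness --- the same ingredients in contrapositive form.
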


\begin{proof}
Suppose that the outermost minimal surface $\Sigma$ is not axisymmetric. Let $\varphi_t$ denote the flow of the axisymmetric Killing field $\eta$, so that $\partial_t\varphi_t=\eta\circ\varphi_t$. Lack of axisymmetry implies that $\eta$ is not tangential to $\Sigma$ at all points. Therefore there exists a nonzero $t_0\sim 0$ such that a domain within $\varphi_{t_0}(\Sigma)$ lies outside of $\Sigma$. Note that $\varphi_t$ is a flow by isometries so $\varphi_{t_0}(\Sigma)$ is a minimal surface, and it is still an embedded 2-sphere.

Let $\mathcal{S}$ denote the compact set which is the union of all compact immersed minimal surfaces within $M$, and define the trapped region $\mathcal{T}$ to be the union of $\mathcal{S}$ with all the bounded components of $M\setminus\mathcal{S}$.
The trapped region is compact and its topological boundary is comprised of embedded smooth minimal 2-spheres \cite[Lemma 4.1]{HuiskenIlmanen}. In fact the outermost minimal surface arises as the boundary $\partial\mathcal{T}$.
In light of this, and the fact that a portion of $\varphi_{t_0}(\Sigma)$ lies outside $\Sigma$, it follows that $\Sigma\neq\partial\mathcal{T}$ contradicting the uniqueness of the outermost minimal surface.
\end{proof}

\begin{proof}[Proof of Theorem \ref{massam} $(i)$]
Assume that the outermost minimal surface $\Sigma$ has a single component. From the discussion above we then have that $M_0 =M\setminus\Omega$ is diffeomorphic to the complement of a 3-ball in $\mathbb{R}^3$. Proposition \ref{axi} guarantees that $\partial M_0$ is axisymmetric, and hence $M_0$ is axisymmetric. It follows that $(M_0,g,k,E,B)$ satisfies the hypotheses of Theorem \ref{massam} $(ii)$, and has the same mass, angular momentum, and charge as the original data.
Part $(ii)$ may now be applied to obtain \eqref{3}.
\end{proof}

\section{Multiple Black Holes}
\label{sec4} \setcounter{equation}{0}
\setcounter{section}{4}

Consider the setting of case $(ii)$ in Theorem \ref{thm2} where $(M,g)$ is axisymmetric, asymptotically flat, simply connected, with a single component minimal surface boundary, and a finite number $n$ of additional ends each of which is asymptotically flat or asymptotically cylindrical; see \cite{KhuriWeinstein} for a definition of asymptotically cylindrical ends. The additional ends may be interpreted physically as individual black holes. A version of pseudospherical coordinates exists for this situation, where each additional end is represented by a puncture on the $z$-axis and again the boundary component is identified with a coordinate sphere.

\begin{prop}
Under the hypotheses of Theorem \ref{thm2} $(ii)$, $M$ is diffeomorphic to $\left(\mathbb{R}^3\setminus B_{m_1/2}(0)\right)\setminus\cup_{i=1}^{n}\{p_i\}$ and there exists a global system of cylindrical-type coordinates $(\rho,z,\phi)$ such that $g$ takes the form \eqref{metric} with $\alpha=0$ whenever $\rho=0$. Each puncture $p_i$ represents an additional end in which the metric coefficients have the asymptotics \eqref{asym1} in the asymptotically flat case, or
\begin{equation}\label{asym3}
U=\log r_i +O_{1}(1),\text{ }\text{ }\text{ }\text{ }\alpha=o_{1}(r_{i}^{\frac{1}{2}}),\text{
}\text{ }\text{ }\text{ }A_{\rho}=\rho o_{1}(r_i^{-\frac{5}{2}}),\text{ }\text{ }\text{ }\text{
}A_{z}=o_{1}(r_i^{-\frac{3}{2}}),
\end{equation}
in the asymptotically cylindrical case. Here $r_i$ denotes the Euclidean distance to the puncture $p_i$.
\end{prop}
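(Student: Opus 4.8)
The plan is to extend the single-end construction of Section~\ref{sec2} --- which rests on \cite[Theorem 2.2]{ChruscielNguyen} --- to the present configuration, treating each additional end as a puncture and reducing everything to an analysis on the orbit space $\mathcal{O}=M/U(1)$. Since $M$ is an axisymmetric, simply connected $3$-manifold, $\mathcal{O}$ is a simply connected $2$-manifold with boundary, the boundary being the image of the rotation axis $\Gamma=\{p\in M:\eta_p=0\}$. First I would pin down the topology. Each additional end is $U(1)$-invariant and, being asymptotically flat or asymptotically cylindrical, has $2$-sphere cross-sections, on each of which the $U(1)$ action fixes exactly two points; hence the axis threads the end along two rays, and the corresponding end of $\mathcal{O}$ limits to a single boundary point. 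The distinguished asymptotically flat end provides the unique noncompact direction of $\mathcal{O}$, while the minimal boundary $2$-sphere descends to an interior arc meeting $\Gamma$ only at its two poles. Planarity forced by simple connectivity then identifies $\mathcal{O}$ with the half-plane $\{\rho\ge0\}$ minus a half-disk about the origin and minus $n$ boundary points, which is exactly the orbit space of $(\mathbb{R}^3\setminus B_{m_1/2}(0))\setminus\cup_{i=1}^n\{p_i\}$ with the $p_i$ lying on the $z$-axis; this yields the asserted diffeomorphism.

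Next I would construct the global coordinates. The pair $(\rho,z)$ is obtained as global isothermal coordinates on $(\mathcal{O},h)$, where $h$ is the quotient metric, normalized so that $\rho$ is the harmonic coordinate vanishing on $\Gamma$ and positive in the interior, and $z$ is its harmonic conjugate; single-valuedness of $z$ uses simple connectivity of $\mathcal{O}$. Writing $g=|\eta|_g^2(d\phi+\theta)^2+\pi^*h$ for the connection $1$-form $\theta$ of the $U(1)$-bundle and then expressing $h$ and $\theta$ in the $(\rho,z)$ coordinates recovers the Weyl--Papapetrou form \eqref{metric}, with $e^{-2U+2\alpha}$ the conformal factor of $h$ and $\rho^2 e^{-2U}=|\eta|_g^2$. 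Smoothness of $g$ across the axis --- equivalently, the absence of a conical singularity where the $\phi$-orbits degenerate --- forces the ratio of circumference to radius of small orbits to tend to $2\pi$, which is precisely the normalization $\alpha=0$ on $\{\rho=0\}$. For a single end this entire step is \cite[Theorem 2.2]{ChruscielNguyen}; in the present setting the only new feature is the local behavior of $\rho$ and $z$ near each puncture.

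It then remains to read off the asymptotics at the punctures. Near a puncture $p_i$ modeling an asymptotically flat end I would center a spherical inversion as in Section~\ref{sec2}; the asymptotic flatness of the end transforms into the fall-off \eqref{asym1}, in particular $U=2\log r_i+O(1)$. Near a puncture modeling an asymptotically cylindrical end, the defining feature is that the orbits and cross-sectional $2$-spheres keep bounded size while the radial distance to $p_i$ diverges. Since the orbit length is $2\pi\rho e^{-U}\sim 2\pi r_i e^{-U}$ on a cross-section, boundedness together with the linear degeneration of $\rho$ on the axis forces $e^{-U}\sim r_i^{-1}$, i.e.\ $U=\log r_i+O(1)$; the remaining estimates in \eqref{asym3} follow by matching the conformal factor and connection terms to the cylindrical model, exactly as the asymptotically flat estimates \eqref{asym1} are obtained in Lemma~\ref{lemma1}.

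The main obstacle I anticipate is the global construction of the harmonic coordinate $\rho$ in the presence of several ends: one must show that $\rho$ extends to a single globally defined function whose level sets, together with those of $z$, foliate $M$ so that the distinguished end is characterized by $\rho^2+z^2\to\infty$ while each additional end collapses to exactly one axis point $\{\rho\to0,\,z\to z_i\}$ rather than to a segment of the axis, and that $z$ remains single-valued across the whole configuration. This is precisely where the interplay between simple connectivity and the end asymptotics enters; for a single end it is the content of \cite{ChruscielNguyen}, and the extension reduces to the local study near each puncture described above.
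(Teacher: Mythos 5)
Your plan reproduces the general shape of the construction, but it has a genuine gap at precisely the step that is the heart of the paper's proof: nothing in your construction forces the image of the minimal-surface boundary to be the round half-circle of radius $m_1/2$. You define $(\rho,z)$ as ``global isothermal coordinates on $(\mathcal{O},h)$, normalized so that $\rho$ is the harmonic coordinate vanishing on $\Gamma$,'' but on an orbit space whose boundary consists of the axis $\Gamma$ \emph{together with} the arc coming from $\partial M$, this prescription is underdetermined: there are many harmonic functions vanishing on the axis, no boundary condition is imposed on the arc coming from $\partial M$, and for a generic choice the pair $(\rho,z)$ maps that arc to some curve in the half-plane, not to $\{\rho^2+z^2=(m_1/2)^2\}$. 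Your sentence ``planarity forced by simple connectivity then identifies $\mathcal{O}$ with the half-plane minus a half-disk'' is only a topological identification; the proposition requires a \emph{conformal} one, since the form \eqref{metric} must survive the identification and the boundary must become a coordinate sphere. The paper closes exactly this gap with the Riemann mapping theorem: it first fills in $\partial M$ (a $2$-sphere by \cite[Lemma 4.9]{Hempel}, using simple connectivity --- a point you also take for granted) with a $3$-ball, extends the metric, applies \cite{ch1,Sokolowsky} to the resulting \emph{complete} simply connected manifold to obtain global Brill coordinates, and only then applies an axisymmetric Riemann mapping of the doubled orbit plane sending the projection curve $\gamma$ of $\partial M$ to the circle of radius $m_1/2$; conformality of this map preserves \eqref{metric}, its equivariance keeps the punctures on the axis, and the radius $m_1/2$ is then determined rather than chosen. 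Without some equivalent of this uniformization step your argument does not yield the stated coordinate system.

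Two secondary points. First, your derivation of the puncture asymptotics is heuristic: the orbit-length argument correctly predicts the leading terms $U=2\log r_i+O(1)$ and $U=\log r_i+O(1)$, but the proposition asserts the full decay rates in \eqref{asym1} and \eqref{asym3} (e.g.\ $\alpha=o_1(r_i^{1/2})$, $A_\rho=\rho\, o_1(r_i^{-5/2})$), and these come from the detailed analysis in \cite{ch1,Sokolowsky}, not from matching leading orders; your appeal to ``Lemma \ref{lemma1}'' is also misplaced, since there \eqref{asym1} is obtained by explicit inversion of the known expansion \eqref{f} of the \emph{same} end, whereas here each puncture carries an independent end whose expansion must be established. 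Second, your claim that each additional end contributes exactly one boundary point of the orbit space, and the resulting global structure of $\mathcal{O}$, is asserted rather than proved; this classification of the orbit space is nontrivial (it is the content of the Brill-coordinate results the paper cites) and should either be cited or argued via the structure theory of $U(1)$ actions on simply connected $3$-manifolds.
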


\begin{proof}
The proof is nearly identical to that of \cite[Theorem 2.2]{ChruscielNguyen}, and thus we only give an outline. Since $M$ is simply connected the single boundary component $\partial M$ must topological be a 2-sphere by \cite[Lemma 4.9]{Hempel}.
The boundary may then be filled in with a 3-ball, and the metric extended to this domain to obtain a complete, axisymmetric, simply connected Riemannian manifold $(\hat{M},\hat{g})$ with $n+1$ asymptotic ends. According to \cite{ch1,Sokolowsky} $M$ is diffeomorphic to $\mathbb{R}^3\setminus\cup_{i=1}^{n}\{p_i\}$ with the punctures $p_i$ lying on the $\hat{z}$-axis of a global system of cylindrical (Brill) coordinates $(\hat{\rho},\hat{z},\phi)$ in which $\hat{g}$ has the structure \eqref{metric}. The orbit space $M/U(1)$ is identified with the $\hat{\rho}\hat{z}$-half plane, and may be doubled across the axis so that $(\hat{\rho},\hat{z})$ parameterize $\mathbb{R}^2$ minus the axis punctures. Within this plane the projection of $\partial M$ is given by a smooth closed curve $\gamma$ which intersects the $\hat{z}$-axis at two points, and bounds a disc. Using the Riemann mapping theorem, a conformal transformation of the plane may now be applied which maps $\gamma$ to a circle centered at the origin of radius $m_1/2$. The new coordinates obtained from this map are the desired pseudospherical coordinates $(\rho,z,\phi)$. Although the punctures may move under this transformation, they will remain on the axis since the mapping is axisymmetric. Lastly, the conformal property of the map preserves the metric structure \eqref{metric}.
\end{proof}

Simple connectivity of $M$ yields potentials $v$, $\chi$, and $\psi$ satisfying \eqref{empotential}, \eqref{twistpotential} as well as the asymptotics \eqref{asym}, \eqref{asym2} in the asymptotically flat ends. In asymptotically cylindrical ends \cite{KhuriWeinstein}
\begin{equation}\label{asymcyl}
|\omega|=\rho^{2}O(r_i^{\lambda-5}),\text{ }\text{ }\text{ }\text{ }|\nabla\chi|+|\nabla\psi|=\rho O(r_i^{\lambda-3})\text{
}\text{ }\text{ as }\text{ }\text{ }r_i\rightarrow 0.
\end{equation}
The punctures $\{p_i\}_{i=1}^n$ and ball $B_{m_1/2}$ break up the $z$-axis into a sequence of connected component intervals $\{I_j\}_{j=1}^{n+2}$ on which each of the potentials is constant; this is sometimes referred to as a `rod structure'. As in \eqref{potentialcharge}, the difference of two such constants associated to the intervals surrounding a puncture yields the angular momentum or charge associated to the black hole represented by the puncture.

\begin{figure}
\includegraphics[width=13cm]{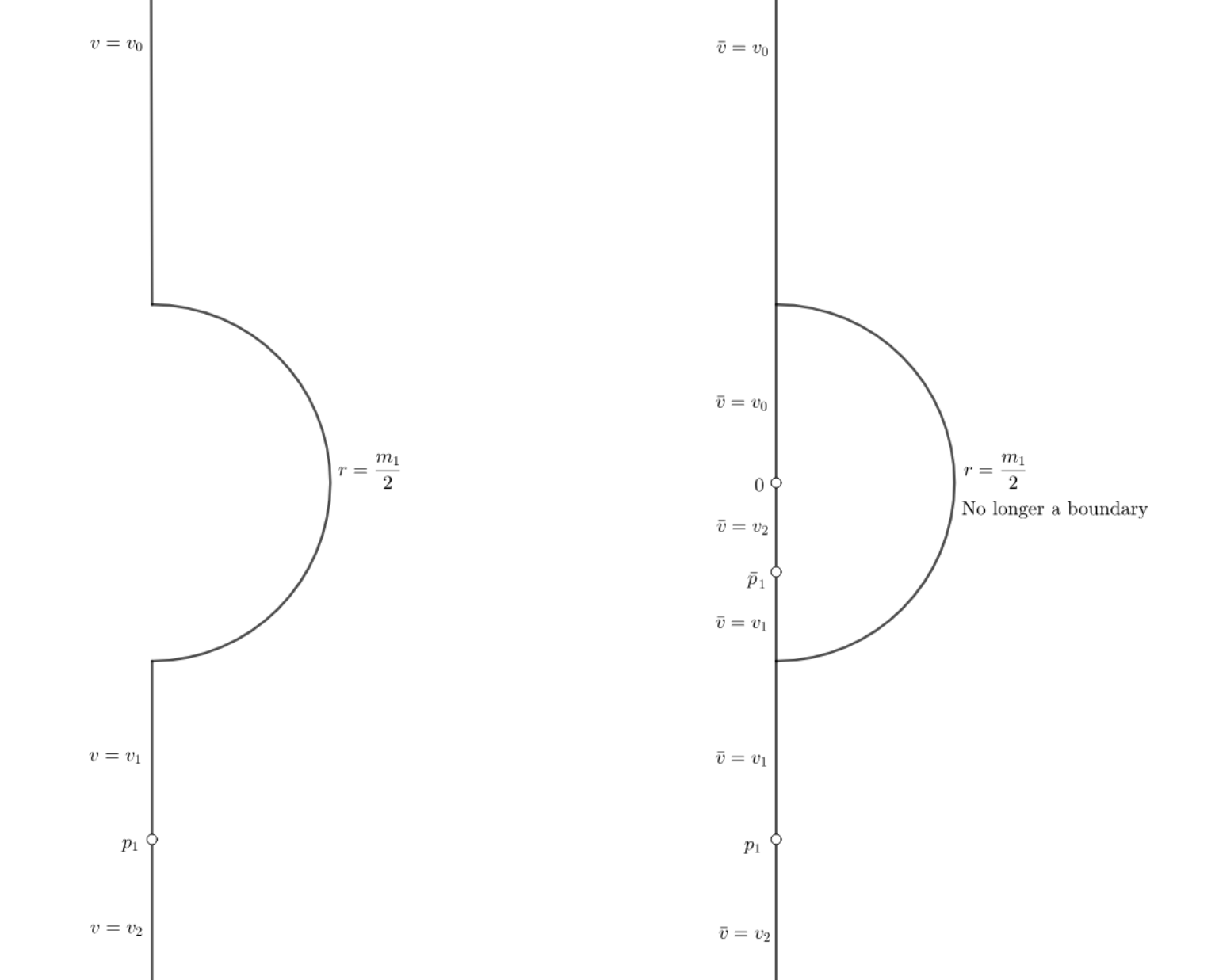}
\caption{Doubling with additional ends}  \label{fig}
\end{figure}

Following Section \ref{sec2} we may double the manifold and potentials across the sphere $S_{m_1/2}$ to obtain a manifold $(\bar{M},\bar{g})$ and functions $\bar{v}$, $\bar{\chi}$, $\bar{\psi}$. The only difference that occurs concerns the number of asymptotic ends. Previously the new manifold had two asymptotically flat ends, however now $\bar{M}$ has $2n+2$ asymptotic ends. In Figure \ref{fig} a diagram of the doubling rod structure in the orbit space is shown, where a single additional end occurs below the circle of radius $m_1/2$ at point $p_1$. After doubling, this end is reflected inside the circle to point $\bar{p}_1$ which represents another end of the same asymptotic type. As before the origin also becomes an asymptotically flat end, associated with the designated asymptotically flat end at infinity. Notice, as is shown in the diagram, that the potential constants on the axis also reflect inside in such a way that they are smooth across the sphere $S_{m_1/2}$ and so that the angular momentum and charge of each end inside the $B_{m_1/2}$ has the same value with opposite sign as that associated with the corresponding puncture outside the ball. Therefore the total angular momentum and total charge of $\bar{M}$, computed by adding all the individual contributions of each end, agrees with the total angular momentum and total charge of $M$.

The presence of additional asymptotically flat and cylindrical ends does not affect the proof of Lemma \ref{lemma1}, as well as scalar curvature lower bounds \eqref{scalar1} and \eqref{scalarfinal}. It follows that as before
\begin{equation}
m\geq\mathcal{I}(\Psi).
\end{equation}

\begin{proof}[Proof of Theorem \ref{thm2}]
Consider case $(ii)$. It remains to show that the renormalized energy may be minimized by a harmonic map
\begin{equation}\label{879}
\mathcal{I}(\Psi)\geq\mathcal{I}(\Psi_1).
\end{equation}
Here $\Psi_1$ is the unique renormalized harmonic map from $\mathbb{R}^3\setminus\{\text{$z$-axis}\}\rightarrow \mathbb{H}^{2}_{\mathbb{C}}$, having the same potential constants as $\Psi$. Such solutions have been constructed in \cite{KhuriWeinstein}, and the corresponding gap bound \eqref{gapbound} was established there as well. Thus, by setting $\mathcal{I}(\Psi_1)=\mathcal{F}$ we obtain the desired result.

Consider now case $(i)$. As in the proof of Theorem \ref{massam} $(i)$ let $M_0$ denote the region exterior to the outermost minimal surface $\Sigma$, with respect to the designated asymptotically flat end. Then $M_0$ is diffeomorphic to the complement of a finite number of open 3-balls and a finite number of points in $\mathbb{R}^3$, where the point punctures represent asymptotically cylindrical ends and the boundary of the 3-balls are the components of $\Sigma$. By assumption, at most one component $\Sigma_1$ of the outermost minimal surface encloses components of the boundary $\partial M$ or nonsimply connected domains. If $\Sigma_1=\emptyset$ then $M$ is simply connected and has no boundary, and therefore this theorem follows from \cite{KhuriWeinstein}. If $\Sigma_1\neq\emptyset$ let $M_1$ denote the region of $M$ outside of $\Sigma_1$. The hypotheses then imply that $M_1$ has a single component minimal surface boundary, is simply connected, and has a finite number of additional asymptotically flat and cylindrical ends. By Proposition \ref{axi} $\partial M_1=\Sigma_1$ is axisymmetric, so that $M_1$ itself is axisymmetric. The initial data $(M_1,g,k,E,B)$ now satisfy the hypotheses of Theorem \ref{thm2} $(ii)$, and \eqref{3.1} follows.
\end{proof}


\begin{thebibliography}{99}

\bibitem{Bray} H. Bray, \textit{Proof of the Riemannian Penrose inequality using the positive mass theorem}, J. Differential Geom., \textbf{59} (2001), 177-267.









\bibitem{ch1} P. Chru\'{s}ciel, \textit{Mass and angular-momentum inequalities for axi-symmetric initial data sets. I. Positivity of Mass}, Ann. Phys., \textbf{323} (2008), 2566-2590.


\bibitem{ChruscielCosta} P. Chru\'{s}ciel, and J. Costa, \textit{Mass, angular-momentum and charge inequalities for axisymmetric initial data}, Classical Quantum Gravity, \textbf{26} (2009), no. 23, 235013.


\bibitem{ChruscielLiWeinstein} P. Chru\'{s}ciel, Y. Li, and G. Weinstein, \textit{Mass and angular-momentum inequalities for axi-symmetric initial data sets. II. Angular Momentum}, Ann. Phys., \textbf{323} (2008), 2591-2613.

\bibitem{ChruscielNguyen} P. Chru\'{s}ciel, and L. Nguyen, \textit{A lower bound for the mass of axisymmetric connected black hole data sets}, Classical Quantum Gravity, \textbf{28} (2011), 125001.



\bibitem{Costa} J. Costa, \textit{Proof of a Dain inequality with charge}, J. Phys. A, \textbf{43} (2010), no. 28, 285202.

\bibitem{Dain0} S. Dain, \textit{Proof of the angular momentum-mass inequality for axisymmetric black hole}, J. Differential Geom., \textbf{79} (2008), 33-67.

\bibitem{Dain} S. Dain, \textit{Geometric inequalities for axially symmetric black holes}, Classical Quantum Gravity, \textbf{29} (2012), no. 7, 073001.

\bibitem{DainGabachClement} S. Dain, and M. Gabach-Clement, \textit{Geometrical inequalities bounding angular momentum and charges in general relativity},
    Living Rev. Relativ., (2018) \textbf{21}, no. 5.




\bibitem{DainKhuriWeinsteinYamada} S. Dain, M. Khuri, G. Weinstein, and S. Yamada,
\textit{Lower bounds for the area of black holes in terms of mass, charge, and angular momentum}, Phys. Rev. D, \textbf{88} (2013), 024048.

\bibitem{Eichmair}  M. Eichmair, \textit{Existence, regularity, and properties of generalized apparent horizons}, Comm. Math. Phys., \textbf{294} (2010), no. 3, 745-760.

\bibitem{FriedmanSchleichWitt} J. Friedman, K. Schleich, and D. Witt, \textit{Topological censorship}, Phys. Rev. Lett., \textbf{71} (1993), no. 10, 1486-1489.


\bibitem{Galloway} G. Galloway, \textit{Stability and rigidity of extremal surfaces in Riemannian geometry and general relativity}, Surveys in geometric analysis and relativity, 221-239, Adv. Lect. Math. (ALM), \textbf{20}, Int. Press, Somerville, MA, 2011.

\bibitem{GHHP} G. Gibbons, S. Hawking, G. Horowitz, and M. Perry,
\textit{Positive mass theorem for black holes}, Comm. Math. Phys., \textbf{88} (1983), 295-308.



\bibitem{Hawking} S. Hawking, and G. Ellis, \textit{The Large Structure of Space-Time}, Cambridge Monographs on Mathematical Physics, Cambridge University Press, 1973.

\bibitem{Hempel} J. Hempel, \textit{3-manifolds}, Annals of Mathematics Studies No. 86, Princeton University Press, 1976.


\bibitem{HuangSchoenWang} L.-H. Huang, R. Schoen, M.-T. Wang, \textit{Specifying angular momentum and center of mass for vacuum initial data sets}, Comm. Math. Phys., \textbf{306} (2011), no. 3, 785-803.

\bibitem{HuiskenIlmanen} G. Huisken, and T. Ilmanen, \textit{The inverse mean curvature flow and the Riemannian Penrose inequality}, J. Differential Geom., \textbf{59} (2001), 353-437.





\bibitem{Sokolowsky} M. Khuri, and B. Sokolowsky, \textit{Lower bounds for the ADM mass of manifolds with cylindrical ends}, preprint, 2018.


\bibitem{KhuriWeinstein}  M. Khuri, and G. Weinstein, \textit{The Positive mass theorem for multiple rotating charged black holes}, Calc. Var. Partial Differential Equations, \textbf{55} (2016), no. 2, 1-29.

\bibitem{KhuriWeinsteinYamada} M. Khuri, G. Weinstein, and S. Yamada, \textit{Proof of the Riemannian Penrose inequality with charge for multiple black holes}, J. Differential Geom., \textbf{106} (2017), 451-498.

\bibitem{Klainerman} S. Klainerman, \textit{Mathematical challenges of general relativity}, Rendiconti di Matematica, Serie VII, \textbf{27} (2007), 105-122.




\bibitem{Penrose} R. Penrose, \textit{Naked singularities}, Ann. New York Acad. Sci., \textbf{224} (1973), 125-134.

\bibitem{Penrose1} R. Penrose, \textit{Some unsolved problems in classical general relativity}, Seminar on Differential Geometry, Ann. Math. Study, \textbf{102} (1982), 631-668.

\bibitem{SchoenYau0} R. Schoen, and S.-T. Yau, \textit{On the proof of the
positive mass conjecture in general relativity}, Comm. Math. Phys., \textbf{65} (1979), no. 1, 45-76.

\bibitem{SchoenZhou} R. Schoen, and X. Zhou, \textit{Convexity of reduced energy and mass angular momentum inequalities}, Ann. Henri Poincar\'{e}, \textbf{14} (2013), 1747-1773. arXiv:1209.0019.





\bibitem{weinstein96} G. Weinstein, \textit{Harmonic maps with prescribed singularities into Hadamard manifolds}, Math. Res. Lett.,
    \textbf{3} (1996), no 6, 835-844.


\bibitem{Witten} E. Witten, \textit{A new proof of the positive energy theorem}, Comm. Math. Phys., \textbf{80} (1981), 381-402.


\end{thebibliography}
\end{document}